\newtheorem{theorem}{Theorem}
\newtheorem{corollary}[theorem]{Corollary}
\newtheorem{proposition}[theorem]{Proposition}
\newcommand{\goitN}{\ensuremath{\mathcal{G}_{OIT}^{N}\xspace}}
\newtheorem{example}[theorem]{Example}
\newenvironment{proof}[1][Proof]{\textbf{#1.} }{\ \rule{0.5em}{0.5em}}
  \newcommand{\cut}[1]{}
\begin{document}

\title{Transformations of normal form games \\ by preplay offers for payments among players}

\author{Valentin Goranko \\ Technical University of Denmark,  email: \texttt{vfgo@imm.dtu.dk}} 


\date{August 6, 2012}
\maketitle

\begin{abstract}
We consider transformations of normal form games by binding preplay offers of players for payments of utility to other players conditional on them playing designated in the offers strategies. The game-theoretic effect of such preplay offers is  transformation of the payoff matrix of the game by transferring payoffs between players. Here we analyze and completely characterize the possible transformations of the payoff matrix of a normal form game by sets of preplay offers. 
\\
\textbf{Keywords:} normal form games \and preplay offers \and side payments \and game transformations
\end{abstract}

\section{Introduction: the conceptual basis}

It is well known that some normal form games have no pure strategy Nash equilibria, while others, like the Prisoners' Dilemma, have rather unsatisfactory -- e.g., strongly Pareto dominated -- ones. Sometimes, mutually more beneficial outcomes can be achieved if players could communicate and make \emph{binding offers} for payments of bonuses to other players before the play of the game in order to provide additional incentives for them to play desired strategies. More precisely, we assume the possibility that: 

\begin{quote}
 \emph{Before playing the game any player $A$ can make a binding offer to any other player $B$ to pay him, after the game is played, a declared amount of utility $\delta$ if $B$ plays a strategy $s$ specified in the offer by $A$.}
\end{quote}

\newpage

Here is our basic assumption in more details: 
\begin{itemize}
\item Any preplay offer of a player A is \emph{binding and irrevocable for $A$}, and only contingent on B playing the strategy $s$ specified by A. 

\item  However, such offer \emph{does not} create any obligation for $B$ (and, therefore, it does not transform the game into a cooperative one), as $B$ is still at liberty to choose his strategy when the game is actually played. 
 
\item Offers can neither be withdrawn nor rejected.  However, as we will show, they can be effectively cancelled by a suitable exchange of 'counter-offers' by both players involved. 
 
\item Offers can only be made for non-negative payments. 
Again, we will show that negative offers (regarded as threats for punishment) can be simulated, 
 but only by cooperation of both players involved.   
 
\item 
 In principle, preplay offers are unbounded. 
In reality, offers of rational players are bounded above by the currently maximal payoff in the game. 

\item We only consider offers contingent on pure strategies, even though players can still play mixed strategies. 
\end{itemize}

Every player can make several offers, to different players, so the possible behaviours of the players remain, in principle, complex and unconstrained. 

\medskip
A key observation:  \emph{every preplay offer transforms the normal form game into another one, by an explicitly defined transformation of the payoff matrix}. 

\medskip
In this paper we study the purely mathematical effect of preplay offers on the payoff matrices of normal form games, disregarding any rationality considerations that may prescribe to players if and what offers to make. That is, here we are only interested in \emph{how} a game matrix can be transformed by preplay offers, but not \emph{why} players may wish to exchange offers in order to effect a given possible transformation. The latter, which is the truly game theoretic question, we study in the separate papers \cite{manifesto,Preplay-tech1}. 

\medskip 
The contributions of the present paper are technical: we characterize completely and rather transparently the game matrix transformations that can be induced by preplay offers of the type described above. 

\medskip
We note that somewhat more general types of side payments -- not only positive but also possibly negative (threats for punishments) and contingent not just on the recipient's strategy but also on own actions, or on an entire strategy profile (i.e., on an outcome) -- have been studied before in the literature, most notably by Jackson and Wilkie \cite{JW05} and the recent follow-up by Ellingsen and Paltseva \cite{EP11}. There are some essential differences in the assumptions made in those papers and in the present work, and respectively in the game-theoretic properties and effect of such payments, as demonstrated in \cite{manifesto} and \cite{Preplay-tech1}. As we show in the present paper, the game matrix transformations induced by preplay offers considered here can simulate negative payments, but not the offers for payments contingent on strategy profiles considered in  \cite{JW05}  and \cite{EP11}.

\section{Preplay offers and payoff matrix transformations}

\subsection{A motivating example}

Consider a standard version of the Prisoners' Dilemma (PD)  game with a payoff matrix  
\[
\left.\begin{array}{c|c|c}
I \backslash II & C & D \\
\hline 
C & 4,4 & 0,5  \\
\hline 
D & 5,0 & 1,1
\end{array}\right.
\]

The only Nash Equilibrium is (D,D) with the paltry payoff (1,1).

Now, suppose I makes to II a \emph{binding offer} to pay her 2 utils after the game if II plays C. That offer transforms the game by transferring 2 utils from the payoff of I to the payoff of II in every entry of the column of the game matix, where II plays C, as follows: 
\[
\left.\begin{array}{c|c|c}
I \backslash II & C & D \\
\hline 
C & 2,6 & 0,5  \\
\hline 
D & 3,2 & 1,1
\end{array}\right.
\]

In this game, player I still has the incentive to play D, which strictly dominates C for him, but the dominant strategy for II now is C, and thus the only Nash Equilibrium is (D,C) with payoff (3,2) -- strictly better than the original payoff (1,1).

Of course, II can now realize that player I has no incentive to cooperate yet. That incentive, however, can be created if player II, too, makes an offer to pay  2 utils to player I after the game, if I cooperates. Then the game transforms as follows:
\[
\left.\begin{array}{c|c|c}
I \backslash II & C & D \\
\hline 
C & 4,4 & 2,3  \\
\hline 
D & 3,2 & 1,1
\end{array}\right.
\]

In this game, the only Nash Equilibrium is (C,C) with payoff (4,4), which is Pareto optimal.
Note that this is the same payoff for (C,C) as in the original PD game, but now both players have transformed he game into one where they both have incentives to cooperate, and have thus escaped from the trap of the original Nash Equilibrium (D,D).

\subsection{Transformations of normal form games by preplay offers}

Now, let us generalize. 
Consider a general 2-person (for technical simplicity only) normal form (NF) game with a payoff matrix 
\[
\left.\begin{array}{c|c|c|c|c}
A \backslash B  & B_{1} &  \cdots &  B_{j} & \cdots \\\hline
A_{1} & \cdots & \cdots & a_{1j}, b_{1j} & \cdots  \\\hline 
A_{2} & \cdots & \cdots & a_{2j}, b_{2j} & \cdots  \\\hline 
  \cdots &  \cdots &  \cdots &  \cdots &  \cdots \\\hline 
A_{i} & \cdots & \cdots & a_{ij}, b_{ij} & \cdots  \\\hline 
  \cdots &  \cdots &  \cdots &  \cdots &  \cdots \\\hline  
\end{array}\right.
\]

Suppose player A makes a preplay offer to player B to pay her additional utility $\delta\geq 0$
\footnote{Of course, preplay offers where $\delta= 0$ make no difference in the game. 
The technical reason we allow $\delta= 0$ is to have an identity transformation at hand, but such vacuous offers can also be used by players for signalling.} 
 if $B$ plays $B_{j}$. We will denote such offer by 
 $A \xlongrightarrow{\delta \slash  B_{j}} B$.

That offer transforms the payoff matrix of the game as follows: 
\[
\left.\begin{array}{c|c|c|c|c}
A \backslash B  & B_{1} &  \cdots &  B_{j} & \cdots \\\hline
A_{1} & \cdots & \cdots & a_{1j}-\delta, b_{1j}+\delta & \cdots  \\\hline 
A_{2} & \cdots & \cdots & a_{2j}-\delta, b_{2j}+\delta & \cdots  \\\hline 
  \cdots &  \cdots &  \cdots &  \cdots &  \cdots \\\hline 
A_{i} & \cdots & \cdots & a_{ij}-\delta, b_{ij}+\delta & \cdots  \\\hline 
  \cdots &  \cdots &  \cdots &  \cdots &  \cdots \\\hline  
\end{array}\right.
\]

We will call such transformation of a payoff matrix a \emph{primitive offer-induced transformation}, or a POI-transformation, for short. 

Several preplay offers can be made by each player. Clearly, the transformation of a payoff matrix induced by several preplay offers can be obtained by applying the POI-transformations corresponding to each of the offers consecutively, in any order. 
We will call such transformations \emph{offer-induced transformations}, or OI-transformations, for short. Thus, every OI-transformation corresponds to a \emph{set} of preplay offers, respectively a set of POI-transformations. Note that the sets generating a given 
OI-transformation need not be unique. For instance, A can make two independent consecutive offers 

$A \xlongrightarrow{\delta_{1} \slash  B_{j}} B$
and 
$A \xlongrightarrow{\delta_{2} \slash  B_{j}} B$
equivalent to the offer 
$A \xlongrightarrow{\delta_{1}+\delta_{2} \slash  B_{j}} B$.

Thus, every OI-transformation has a canonical form 

\[\{A \xlongrightarrow{\delta^{A}_{1} \slash  B_{1}} B, \ldots,  
A \xlongrightarrow{\delta^{A}_{m} \slash  B_{m}}  B \} \cup 
\{B \xlongrightarrow{\delta^{B}_{1} \slash  A_{1}} A, \ldots,  
B \xlongrightarrow{\delta^{B}_{n} \slash  A_{n}} A \}\]
for some non-negative numbers $\delta^{A}_{1}, \ldots \delta^{A}_{m} ,\delta^{B}_{1}, \ldots \delta^{B}_{n}$. 

That transformation changes the payoffs from  $(a_{ij},b_{ij})$  to  $(\hat{a}_{ij},\hat{b}_{ij})$ as follows: 
\[ \hat{a}_{ij} = a_{ij} - \delta^{A}_{j} +  \delta^{B}_{i}, \ \  \hat{b}_{ij} = b_{ij} + \delta^{A}_{j} -  \delta^{B}_{i}.\]

Note that the players can collude to make \emph{any} outcome, with \emph{any} non-negative re-distribution of the payoffs in it, a \emph{strictly dominant strategy equilibrium}, by exchanging sufficiently hight offers to make the strategies generating that outcome strictly dominant\footnote{Clearly, rational players would only be interested in making offers inducing payoffs that are optimal for them. Thus, a `preplay negotiation phase' emerges which is studied in \cite{manifesto,Preplay-tech1}; here we do not take rationality considerations into account.}. 

\section{The group of offer-induced game transformations}

We begin with some useful general observations on OI-transformations of NF games for any fixed number of players $N$.

\begin{enumerate}
\item An OI-transformation does not change the sum of the payoffs of all players in any outcome, only re-distributes them. In particular, OI-transformations preserve the class of zero-sum games.

\item An OI-transformation induced by a preplay offer of player A does not change the preferences of A regarding his own strategies. In particular, (weak or strict) dominance between strategies of player  A is invariant under OI-transformations induced by preplay offers of A, i.e.: a strategy $A_{i}$ dominates (weakly, resp. strongly) a strategy $A_{j}$ before a transformation induced by a preplay offer made by A if and only if $A_{i}$ dominates (weakly, resp. strongly) $A_{j}$ after the transformation.  

\end{enumerate}

\begin{proposition}
The set of all OI-transformations of payoff matrices of $N$-person strategic games, for any fixed $N>$1, forms a commutative group under composition. 
\end{proposition}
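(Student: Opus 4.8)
The plan is to make the group structure transparent by identifying each OI-transformation with the displacement it adds to the payoff matrix. The starting observation is that, by the displayed formula for the canonical form, every OI-transformation acts on payoff matrices as a pure \emph{translation}: writing a strategy profile as $\sigma$ and player $\agk$'s payoff in it as $u_{\agk}(\sigma)$, an OI-transformation is determined by a tuple $\delta=(\delta^{\agi\to\agj}_{s})$ of non-negative offer amounts (one for each ordered pair of distinct players $\agi,\agj$ and each pure strategy $s$ of $\agj$, with unmentioned offers set to $0$), and it sends $u_{\agk}(\sigma)$ to $u_{\agk}(\sigma)+\Delta_{\delta}u_{\agk}(\sigma)$, where
\[
\Delta_{\delta}u_{\agk}(\sigma)\;=\;\sum_{\agi\neq \agk}\delta^{\agi\to \agk}_{\sigma_{\agk}}\;-\;\sum_{\agj\neq \agk}\delta^{\agk\to\agj}_{\sigma_{\agj}} .
\]
This displacement depends on $\delta$ alone, linearly, and not on the original payoffs; and, as already noted, every OI-transformation has such a canonical form, so $\delta\mapsto T_{\delta}$ is a surjection onto the set of OI-transformations.

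Closure, commutativity, associativity, and the identity then come essentially for free. Composing $T_{\delta_{1}}$ after $T_{\delta_{2}}$ adds first $\Delta_{\delta_{2}}$ and then $\Delta_{\delta_{1}}$, i.e.\ it adds $\Delta_{\delta_{1}}+\Delta_{\delta_{2}}=\Delta_{\delta_{1}+\delta_{2}}$ by linearity, so $T_{\delta_{1}}\circ T_{\delta_{2}}=T_{\delta_{1}+\delta_{2}}$, which is again an OI-transformation because the amounts $\delta_{1}+\delta_{2}$ are still non-negative; commutativity is immediate from $\delta_{1}+\delta_{2}=\delta_{2}+\delta_{1}$; associativity is that of function composition (equivalently, of addition of displacements); and $T_{0}$, induced by the empty set of offers, is the identity.

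The only real work is the existence of inverses, since the naive candidate $-\delta$ has negative amounts and is not a legitimate tuple of offers. The key fact is that $\delta\mapsto T_{\delta}$ is \emph{not} injective: adding one and the same constant $C$ to every offer amount $\delta^{\agi\to\agj}_{s}$ leaves $\Delta_{\delta}$ unchanged, because both sums defining $\Delta_{\delta}u_{\agk}(\sigma)$ range over exactly the $N-1$ players other than $\agk$, so both increase by $(N-1)C$, which cancels. Hence, given $T_{\delta}$, I would set $C:=\max_{\agi,\agj,s}\delta^{\agi\to\agj}_{s}$ (a finite maximum, the game being finite) and $\delta'^{\,\agi\to\agj}_{s}:=C-\delta^{\agi\to\agj}_{s}\ge 0$; a one-line computation gives $\Delta_{\delta'}=-\Delta_{\delta}$ (again the $(N-1)C$ terms cancel), whence $T_{\delta'}\circ T_{\delta}=T_{\delta}\circ T_{\delta'}=T_{0}$. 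Thus $T_{\delta'}$ is the inverse of $T_{\delta}$, and all group axioms hold. I expect this inversion step — really just the observation that a common upward shift of all offers is invisible, which is the abstract form of the ``cancellation by counter-offers'' and ``simulation of negative offers'' mentioned in the introduction — to be the only non-routine ingredient; everything else is bookkeeping within the translation picture.
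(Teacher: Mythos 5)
Your proof is correct, and on the routine axioms (closure, commutativity, associativity, identity) it does essentially what the paper does, just phrased through the observation that every OI-transformation is a translation by a displacement linear in the offer tuple. Where you genuinely diverge is the inverse, which is also the only non-trivial step in the paper's argument. The paper inverts one primitive offer $A \xlongrightarrow{\delta \slash B_{j}} B$ at a time, via an explicit exchange of counter-offers ($A$ offers $\delta$ contingent on every $B_{k}$ with $k\neq j$, and $B$ offers $\delta$ to $A$ unconditionally), and then obtains the inverse of a general OI-transformation by composing these primitive inverses. You instead identify a nontrivial element of the kernel of $\delta\mapsto T_{\delta}$ -- the constant tuple, since a uniform shift by $C$ adds $(N-1)C$ to both sums in the displacement -- and use it to replace the inadmissible tuple $-\delta$ by the non-negative representative $C\mathbf{1}-\delta$. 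The two constructions in fact coincide on a single two-player primitive offer (your $C\mathbf{1}-\delta$ with $C=\delta$ is exactly the paper's counter-offer set, plus one vacuous zero offer), but yours inverts an arbitrary collection of offers among arbitrarily many players in one uniform formula, whereas the paper writes out only the two-player primitive case and appeals to composition for the rest. The cost is that your inverse includes offers between players not involved in the original transformation; the gain is that the kernel observation makes transparent, in one line, why negative payments are simulable -- something the paper derives as an informal corollary of its construction.
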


\begin{proof} 
The composition of two OI-transformations is an  OI-transformation, corresponding to the union of the preplay offers generating the two transformations. Furthermore, the composition of OI-transformations is clearly associative and commutative, because the order of transforming the matrix with regards to the primitive preplay offers generating the composed OI-transformations is not essential.  

The  OI-transformation corresponding to any offer 
$A \xlongrightarrow{0 \slash  B_{j}} B$
is the identity transformation.

The existence  of an inverse OI-transformation to any OI-transformation is a bit trickier, because an offered payment cannot be negative. However, note first that the inverse of the POI-transformation corresponding to an offer 
$A \xlongrightarrow{\delta \slash  B_{j}} B$
can be composed from the following offers: 
\begin{itemize}
\item Player A makes an offer 
$A \xlongrightarrow{\delta \slash  B_{k}} B$
for every strategy $B_{k}$  of B, for $k \neq j$.  That basically means that A offers a reward $\delta$ to B if B \emph{does not} play the strategy  $B_{j}$. 
\item Player B makes an offer 
$B \xlongrightarrow{\delta \slash  A_{i}} A$
for \emph{every} strategy $A_{i}$ of A. That basically means that B offers unconditionally a refund $\delta$ to A.
\end{itemize}

The cumulative effect of these offers is that none of A and B gets anything from the other if 
B plays any $B_{k}$ where $k \neq j$, but if B plays $B_{j}$ then she effectively pays back 
an amount $\delta$ to A, thus canceling the offer 
$A \xlongrightarrow{\delta \slash  B_{j}} B$

Finally, the inverse of any OI-transformation $\mathcal{T}$ can be obtained by composing the inverses of the primitive OI-transformation of which $\mathcal{T}$ is composed. 
\end{proof} 

\medskip

The above proof also shows that offers for negative payments (i.e. threats of punishments) can be effected by OI-transformations, too. Thus, henceforth we may assume that an offered payment may be any real number, and an OI-transformation has a canonical form 

\[\{A \xlongrightarrow{\delta^{A}_{1} \slash  B_{1}} B, \ldots,  
A \xlongrightarrow{\delta^{A}_{m} \slash  B_{m}}  B \} \cup 
\{B \xlongrightarrow{\delta^{B}_{1} \slash  A_{1}} A, \ldots,  
B \xlongrightarrow{\delta^{B}_{n} \slash  A_{n}} A \}\]
for some \emph{real} numbers $\delta^{A}_{1}, \ldots \delta^{A}_{m} ,\delta^{B}_{1}, \ldots \delta^{B}_{n}$.  

\medskip
Furthermore, OI-transformations can also simulate the more complex offers, such as the conditional offers considered in \cite{manifesto}. 

On the other hand, it will follow form Theorem \ref{GT2} that OI-transformations cannot simulate the offers considered in \cite{JW05} and \cite{EP11}, contingent not just on the recipient's strategy but on an entire strategy profile (i.e., on an outcome). 

\medskip
We denote the group of OI-transformations of $N$-person NF games by \goitN. 
For a given payoff matrix $M$ we denote by $\goitN(M)$ the \emph{orbit of $M$ under $\goitN$}, i.e.,  the result of the action of that group on $M$, which is the set of all payoff matrices obtained by applying OI-transformations 
to $M$. We will also call $\goitN(M)$ the 
 \emph{OIT type of $M$}. Note that, because $\goitN$ is a group, the set of OIT types of NF game matrices forms a partition of the set of NF game matrices and thus generates an equivalence relation on that set. In particular, every payoff matrix in $\goitN(M)$ has the  OIT game type of $M$.

\section{Characterizing the OI-transformations of 2-person normal form games}

Here we study and answer  the question: given a NF game matrix $M$, what are the possible results of OI-transformations of $\mathcal{M}$? That is, how is OIT type of $M$ characterized and constructed? 

Let us first re-phrase the question: given another game matrix  $\widehat{M}$ of the same dimensions, \emph{can $\widehat{M}$ be obtained from $M$ by an OI-transformation}? 

To answer this question, let us first consider, for technical simplicity, the case of 2-person games. Let 

\[ M = 
\left.\begin{array}{c|c|c|c|c|c|}
A \backslash B  & B_{1} &  \cdots &  B_{j} & \cdots & B_{m}  \\\hline
A_{1} & a_{11}, b_{11}  & \cdots & a_{1j}, b_{1j} & \cdots & a_{1m}, b_{1m}  \\\hline 
  \cdots &  \cdots &  \cdots &  \cdots &  \cdots & \cdots \\\hline 
A_{i} & a_{i1}, b_{i1}  & \cdots & a_{ij}, b_{ij} & \cdots & a_{im}, b_{im}  \\\hline 
  \cdots &  \cdots &  \cdots &  \cdots &  \cdots & \cdots \\\hline  
A_{n} & a_{n1}, b_{n1}  & \cdots & a_{nj}, b_{nj} & \cdots & a_{nm}, b_{nm}  \\\hline 
\end{array}\right.
\]
Further we will use a compact notation for $M$ as follows: 
$M = \bigg[a_{ij}, b_{ij}\bigg]_{{i = 1,\ldots, n}\atop{j = 1,\ldots, m}}$.

\medskip
Now, let 
\[ \widehat{M} = 
\left.\begin{array}{c|c|c|c|c|c|}
A \backslash B  & B_{1} &  \cdots &  B_{j} & \cdots & B_{m}  \\\hline
A_{1} & \hat{a}_{11}, \hat{b}_{11}  & \cdots & \hat{a}_{1j}, \hat{b}_{1j} & \cdots & \hat{a}_{1m}, \hat{b}_{1m}  \\\hline 
  \cdots &  \cdots &  \cdots &  \cdots &  \cdots & \cdots \\\hline 
A_{i} & \hat{a}_{i1}, \hat{b}_{i1}  & \cdots & \hat{a}_{ij}, \hat{b}_{ij} & \cdots & \hat{a}_{im}, \hat{b}_{im}  \\\hline 
  \cdots &  \cdots &  \cdots &  \cdots &  \cdots & \cdots \\\hline  
A_{n} & \hat{a}_{n1}, \hat{b}_{n1}  & \cdots & \hat{a}_{nj}, \hat{b}_{nj} & \cdots & \hat{a}_{nm}, \hat{b}_{nm}  \\\hline 
\end{array}\right.
\]

Suppose $\widehat{M}$ can be obtained from $M$ by an OI-transformation
\[\tau = 
\{A \xlongrightarrow{\delta^{A}_{1} \slash  B_{1}} B, \ldots,  
A \xlongrightarrow{\delta^{A}_{m} \slash  B_{m}}  B \} \cup 
\{B \xlongrightarrow{\delta^{B}_{1} \slash  A_{1}} A, \ldots,  
B \xlongrightarrow{\delta^{B}_{n} \slash  A_{n}} A \}\]
for some real numbers $\delta^{A}_{1}, \ldots \delta^{A}_{m} ,\delta^{B}_{1}, \ldots \delta^{B}_{n}$.

Recall, that the transformation $\tau$ changes the payoffs as follows: 
\[ \hat{a}_{ij} = a_{ij}  +  \delta^{B}_{i} - \delta^{A}_{j}, \ \  \hat{b}_{ij} = b_{ij} + \delta^{A}_{j} -  \delta^{B}_{i}.\]

\begin{theorem} 
\label{GT2}
Let $M = \bigg[a_{ij}, b_{ij}\bigg]_{{i = 1,\ldots, n}\atop{j = 1,\ldots, m}} $ and  $\widehat{M} = \bigg[\hat{a}_{ij}, \hat{b}_{ij}\bigg]_{{i = 1,\ldots, n}\atop{j = 1,\ldots, m}} $ be $2$-person NF game matrices  of dimensions $n\times m$. The matrix $\widehat{M}$ can be obtained from $M$ by an OI-transformation if and only if the following two conditions hold, where  $c_{ij} = \hat{a}_{ij} - a_{ij}$ and $d_{ij} = \hat{b}_{ij} - b_{ij}$: 

\begin{enumerate}
\item[$C_{1}$:] $a_{ij} + b_{ij} = \hat{a}_{ij} + \hat{b}_{ij}$, 
or equivalently, $d_{ij} = -c_{ij}$, 
for all $i = 1,\ldots n, j = 1,\ldots m$.  
\item[$C_{2}$:] $c_{ij} - c_{i(j+1)} = c_{(i+1)j} - c_{(i+1)(j+1)}$, or equivalently, 
$c_{ij} + c_{(i+1)(j+1)} = c_{i(j+1)} + c_{(i+1)j}$,
for all $i = 1,\ldots n-1, j = 1,\ldots m-1$.  
\end{enumerate}
\end{theorem}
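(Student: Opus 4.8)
The plan is to factor everything through the difference matrices $[c_{ij}]$ and $[d_{ij}]$ and to isolate the single underlying question: which $n\times m$ real matrices $[c_{ij}]$ arise in the form $c_{ij}=\delta^{B}_{i}-\delta^{A}_{j}$ for some real vectors $(\delta^{A}_{j})_{j},(\delta^{B}_{i})_{i}$? Condition $C_{1}$ will reduce the $b$-coordinates to the $a$-coordinates, so that once the $c$-matrix is understood, $\widehat{M}$ is pinned down.

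For necessity, suppose $\widehat{M}=\tau(M)$ with $\tau$ in canonical form. The transformation law recalled just before the theorem gives $c_{ij}=\delta^{B}_{i}-\delta^{A}_{j}$ and $d_{ij}=\delta^{A}_{j}-\delta^{B}_{i}$, so $d_{ij}=-c_{ij}$, which is $C_{1}$. Moreover $c_{ij}-c_{i(j+1)}=\delta^{A}_{j+1}-\delta^{A}_{j}$ does not depend on $i$; comparing rows $i$ and $i+1$ yields $C_{2}$. (Equivalently, one just computes $c_{ij}+c_{(i+1)(j+1)}-c_{i(j+1)}-c_{(i+1)j}=0$ directly from $c_{ij}=\delta^{B}_{i}-\delta^{A}_{j}$.)

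For sufficiency, assume $C_{1}$ and $C_{2}$. The key step is to upgrade the \emph{local} identity $C_{2}$, stated only for consecutive indices, to the \emph{global} identity $c_{ij}+c_{11}=c_{i1}+c_{1j}$ for all $i,j$. This is a telescoping argument: rewriting $C_{2}$ as $c_{(i+1)(j+1)}-c_{(i+1)j}=c_{i(j+1)}-c_{ij}$ shows that the first difference in the column direction, $c_{i(j+1)}-c_{ij}$, is the same for every row $i$; summing these first differences from column $1$ to column $j$ (i.e. an induction on $j$) gives $c_{ij}-c_{i1}=c_{1j}-c_{11}$, which is precisely the global identity. Now define $\delta^{A}_{j}:=c_{11}-c_{1j}$ and $\delta^{B}_{i}:=c_{i1}$; then $\delta^{B}_{i}-\delta^{A}_{j}=c_{i1}-c_{11}+c_{1j}=c_{ij}$ by the global identity, and by $C_{1}$ also $d_{ij}=-c_{ij}=\delta^{A}_{j}-\delta^{B}_{i}$. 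Hence the OI-transformation with parameters $\delta^{A}_{1},\dots,\delta^{A}_{m},\delta^{B}_{1},\dots,\delta^{B}_{n}$ satisfies $\hat a_{ij}=a_{ij}+\delta^{B}_{i}-\delta^{A}_{j}$ and $\hat b_{ij}=b_{ij}+\delta^{A}_{j}-\delta^{B}_{i}$ exactly, i.e. it sends $M$ to $\widehat{M}$; recall that, by the discussion following the Proposition above, the offered amounts $\delta$ may be taken to be arbitrary real numbers, so this is a legitimate OI-transformation.

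I do not expect a serious obstacle here: the only point that genuinely needs care is the propagation of $C_{2}$ from consecutive index pairs to all index pairs (and, dually, keeping in mind that the pair $(\delta^{A},\delta^{B})$ is determined only up to adding a common constant, which is harmless). It is also worth checking the degenerate cases $n=1$ or $m=1$, where $C_{2}$ is vacuous and the construction above still works verbatim (with $\delta^{A}_{1}=c_{11}-c_{11}=0$, respectively $\delta^{B}_{i}=c_{i1}$).
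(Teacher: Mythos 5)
Your proof is correct. The necessity direction is the same direct computation as in the paper. For sufficiency, however, you take a genuinely different route: the paper writes down the full $mn\times(n+m)$ linear system $\delta^{B}_{i}-\delta^{A}_{j}=c_{ij}$ and runs Gaussian elimination, showing that the coefficient matrix and the augmented matrix both have rank $n+m-1$ (with $C_{2}$ entering as the compatibility condition on the right-hand sides), so the system is consistent with a one-parameter family of solutions. You instead first upgrade the local condition $C_{2}$ to the global identity $c_{ij}=c_{i1}+c_{1j}-c_{11}$ by a telescoping/induction argument, and then exhibit an explicit solution $\delta^{A}_{j}=c_{11}-c_{1j}$, $\delta^{B}_{i}=c_{i1}$. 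The two arguments establish the same solvability fact, but yours is shorter, avoids the matrix bookkeeping, and yields a closed form for the offers (essentially fixing the free parameter by the normalization $\delta^{A}_{1}=0$), whereas the paper's elimination additionally displays the full solution set and makes the ``rank $n+m-1$'' structure explicit. Your appeal to the post-Proposition remark that offered amounts may be taken to be arbitrary reals is the right way to handle possibly negative $\delta$'s, and your remarks on the degenerate cases $n=1$ or $m=1$ and on the translation-invariance of $(\delta^{A},\delta^{B})$ are accurate.
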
 

\begin{proof}
Condition $C_{1}$ is  obviously necessary, because all side payments are between the two players, so the sum of their payoffs in any given outcome remains constant. 

To show the necessity of $C_{2}$, a simple calculation suffice: 

$c_{ij} + c_{(i+1)(j+1)}$  = 

$\hat{a}_{ij} - a_{ij} + \hat{a}_{(i+1)(j+1)} - a_{(i+1)(j+1)}$ = 

$a_{ij}  +  \delta^{B}_{i} - \delta^{A}_{j} - a_{ij} + a_{(i+1)(j+1)} 
 +  \delta^{B}_{i+1} - \delta^{A}_{j+1} - a_{(i+1)(j+1)}$ = 
 
$\delta^{B}_{i} - \delta^{A}_{j}  +  \delta^{B}_{i+1} - \delta^{A}_{j+1}$. 

\medskip 
Likewise:  

$ c_{i(j+1)} + c_{(i+1)j}$  = 

$\hat{a}_{i(j+1)} - a_{i(j+1)} + \hat{a}_{(i+1)j} - a_{(i+1)j}$ = 

$a_{i(j+1)}  +  \delta^{B}_{i} - \delta^{A}_{j+1} - a_{i(j+1)} + a_{(i+1)j} 
 +  \delta^{B}_{i+1} - \delta^{A}_{j} - a_{(i+1)j}$ = 
 
$\delta^{B}_{i} - \delta^{A}_{j+1} +  \delta^{B}_{i+1} - \delta^{A}_{j}$. 

\medskip 
Clearly, the two results are equal.

For the sufficiency, suppose $C_{1}$ and $C_{2}$ hold. Then, first observe that, due to $C_{1}$, any OI-transformation that transforms A's payoffs in $M$ into A's payoffs in $\widehat{M}$ will transform accordingly the payoffs of B in $M$ into the payoffs of B in $\widehat{M}$. So, we can ignore B's payoffs and consider only the transformation of the matrices of A's payoffs. 

In order to prove the existence of (real valued) payments $\delta^{A}_{1}, \ldots \delta^{A}_{m} ,\delta^{B}_{1}, \ldots \delta^{B}_{n}$ that effect the transformation from $M$ to  $\widehat{M}$ ,  
we consider the system of  $mn$ linear equations for these $n+m$ unknowns that expresses the changes of A's payoffs: 
\[ \| \delta^{B}_{i} - \delta^{A}_{j} = c_{ij}, \ \ \mbox{for all} \ i = 1,\ldots n, j = 1,\ldots m. \]

Thus, $\widehat{M}$ can be obtained from $M$ by an OI-transformation precisely when that system has a real solution. The rest is application of standard linear algebra. The matrix of the system is 

\[
\left(\begin{array}{cccccccc|c} \delta^{B}_{1} &  \delta^{B}_{2} & \ldots & \delta^{B}_{n} & \delta^{A}_{1}  & \delta^{A}_{2} & \ldots & \delta^{A}_{m}  & c_{ij} \\
\hline 
1 & 0 & 0 & 0 & -1 & 0 & 0 & 0  & c_{11} \\
1 & 0 & 0 & 0 & 0 & -1 & 0 & 0  & c_{12} \\
\ldots & \ldots  &  \ldots & \ldots & \ldots & \ldots & \ldots & \ldots  & \ldots \\
1 & 0 &  0 & 0 & 0 & 0 & 0 & -1  & c_{1m} \\
0 & 1 & 0 & 0 & -1 & 0 & 0 & 0  & c_{21} \\
0 & 1 &  0 & 0 & 0 & -1 & 0 & 0  & c_{22} \\
\ldots & \ldots  &  \ldots & \ldots & \ldots & \ldots & \ldots & \ldots  & \ldots \\
0 & 1 & 0 & 0 & 0 & 0 & 0 & -1  & c_{2m} \\
\ldots & \ldots  &  \ldots & \ldots & \ldots & \ldots & \ldots & \ldots  & \ldots \\
\ldots & \ldots  &  \ldots & \ldots & \ldots & \ldots & \ldots & \ldots  & \ldots \\
0 & 0 & 0 & 1 & -1 & 0 & 0 & 0  & c_{n1} \\
0 & 0 &  0 & 1 & 0 & -1 & 0 & 0  & c_{n2} \\
\ldots & \ldots  &  \ldots & \ldots & \ldots & \ldots & \ldots & \ldots  & \ldots \\
0 & 0 & 0 & 1 & 0 & 0 & 0 & -1  & c_{nm} \\
\end{array}\right)
\]

We now apply to it the Gauss elimination method. Subtracting row 1 from each of rows $2,\ldots,m$, then row $m+1$ from rows 
$m+2,\ldots,2m$, etc., and finally row $(n-1)m+1$ from rows $(n-1)m+2,\ldots,nm$ produces: 

\[
\left(\begin{array}{cccccccc|c} \delta^{B}_{1} &  \delta^{B}_{2} & \ldots & \delta^{B}_{n} & \delta^{A}_{1}  & \delta^{A}_{2} & \ldots & \delta^{A}_{m} & c_{ij} \\
\hline 
1 & 0 & 0 & 0 & -1 & 0 & 0 & 0  & c_{11} \\
0 & 0 & 0 & 0 & 1 & -1 & 0 & 0  & c_{12}-c_{11} \\
\ldots & \ldots  &  \ldots & \ldots & \ldots & \ldots & \ldots & \ldots  & \ldots \\
0 & 0 &  0 & 0 & 1 & 0 & 0 & -1  & c_{1m}-c_{11} \\
0 & 1 & 0 & 0 & -1 & 0 & 0 & 0  & c_{21} \\
0 & 0 &  0 & 0 & 1 & -1 & 0 & 0  & c_{22}-c_{21} \\
\ldots & \ldots  &  \ldots & \ldots & \ldots & \ldots & \ldots & \ldots  & \ldots \\
0 & 0 & 0 & 0 & 1 & 0 & 0 & -1  & c_{2m}-c_{21} \\
\ldots & \ldots  &  \ldots & \ldots & \ldots & \ldots & \ldots & \ldots  & \ldots \\
\ldots & \ldots  &  \ldots & \ldots & \ldots & \ldots & \ldots & \ldots  & \ldots \\
0 & 0 & 0 & 1 & -1 & 0 & 0 & 0  & c_{n1}\\
0 & 0 &  0 & 1 & 0 & -1 & 0 & 0  & c_{n2}-c_{n1}  \\
\ldots & \ldots  &  \ldots & \ldots & \ldots & \ldots & \ldots & \ldots  & \ldots \\
0 & 0 & 0 & 1 & 0 & 0 & 0 & -1  & c_{nm}-c_{n1}  \\
\end{array}\right)
\]

Now, note that rows $k$, $m+k$, \ldots $(n-1)m+k$ have the same left hand sides, for each $k=2,3,\ldots,m$. For the system to be consistent, the right hand sides must be equal, too. Indeed, by consecutive applications of condition $C_{2}$: $c_{1k}-c_{11}$ = $c_{2k}-c_{21}$ = \ldots $c_{nk}-c_{n1}$.

Further, we subtract row $k$ from each of $m+k$, \ldots $(n-1)m+k$, for each $k=2,3,\ldots,m$ and remove the resulting 0-rows. After re-arrangement of the remaining rows we obtain: 

\[
\left(\begin{array}{cccccccc|c} \delta^{B}_{1} &  \delta^{B}_{2} & \ldots & \delta^{B}_{n} & \delta^{A}_{1}  & \delta^{A}_{2} & \ldots & \delta^{A}_{m} & c_{ij} \\
\hline 
1 & 0 & 0 & 0 & -1 & 0 & 0 & 0  & c_{11} \\
0 & 1 & 0 & 0 & -1 & 0 & 0 & 0  & c_{21} \\
\ldots & \ldots  &  \ldots & \ldots & \ldots & \ldots & \ldots & \ldots  & \ldots \\
0 & 0 & 0 & 1 & -1 & 0 & 0 & 0  & c_{n1}\\
0 & 0 & 0 & 0 & 1 & -1 & 0 & 0  & c_{12}-c_{11} \\
\ldots & \ldots  &  \ldots & \ldots & \ldots & \ldots & \ldots & \ldots  & \ldots \\
0 & 0 &  0 & 0 & 1 & 0 & 0 & -1  & c_{1m}-c_{11} \\
\end{array}\right)
\]

Finally, subtracting row $n+m-2$ from $n+m-1$, then row  $n+m-3$ from $n+m-2$, etc, and lastly, row $n+1$ from $n+2$, we obtain:  

\[
\left(\begin{array}{cccccccc|c} \delta^{B}_{1} &  \delta^{B}_{2} & \ldots & \delta^{B}_{n} & \delta^{A}_{1}  & \delta^{A}_{2} & \ldots & \delta^{A}_{m} & c_{ij} \\
\hline 
1 & 0 & 0 & 0 & -1 & 0 & 0 & 0  & c_{11} \\
0 & 1 & 0 & 0 & -1 & 0 & 0 & 0  & c_{21} \\
\ldots & \ldots  &  \ldots & \ldots & \ldots & \ldots & \ldots & \ldots  & \ldots \\
0 & 0 & 0 & 1 & -1 & 0 & 0 & 0  & c_{n1}\\
0 & 0 & 0 & 0 & 1 & -1 & 0 & 0  & c_{12}-c_{11} \\
0 & 0 & 0 & 0 & 0 & 1 & -1 & 0  & c_{13}-c_{12} \\
\ldots & \ldots  &  \ldots & \ldots & \ldots & \ldots & \ldots & \ldots  & \ldots \\
0 & 0 &  0 & 0 & 0 & 0 & 1 & -1  & c_{1m}-c_{1(m-1)} \\
\end{array}\right)
\]
The ranks of the matrix and the extended matrix above are clearly equal to $n+m-1$, so the corresponding system is consistent. Moreover, it has infinitely many solutions obtained by treating $\delta^{A}_{m}$ as a real parameter and solving for all other unknowns in terms of it. This completes the proof.  
\end{proof}

\medskip 
\begin{corollary}
In the case of $2\times 2$ payoff matrices 
\[ M = \left[\begin{array}{c|cc}
A\backslash  B & B_{1} & B_{2}  \\
\hline 
A_{1} &  a_{11}, b_{11} &  a_{12}, b_{12}  \\
A_{2} &  a_{21}, b_{21} &  a_{22}, b_{22} 
\end{array}\right], \ \ 
\widehat{M} = \left[\begin{array}{c|cc}
A\backslash  B & B_{1} & B_{2}  \\
\hline 
A_{1} &  \hat{a}_{11}, \hat{b}_{11} &  \hat{a}_{12}, \hat{b}_{12}  \\
A_{2} &  \hat{a}_{21}, \hat{b}_{21} &  \hat{a}_{22}, \hat{b}_{22} 
\end{array}\right]
\]
the matrix $\widehat{M}$ can be obtained from $M$ by an OI-transformation if and only if:  
\begin{enumerate}
\item[$C_{1}$:] $a_{ij} + b_{ij} = \hat{a}_{ij} + \hat{b}_{ij}$, 
for all $i,j \in \{1,2 \}$.  
\item[$C_{2}$:] $c_{11} + c_{22} = c_{12} + c_{21}$, where $c_{ij} = \hat{a}_{ij} - a_{ij}$.
\end{enumerate}
\end{corollary}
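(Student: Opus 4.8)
The plan is to obtain the corollary as the immediate specialization of Theorem \ref{GT2} to the case $n = m = 2$; no genuinely new argument is needed, only a check that the two conditions of the theorem assume the stated form in this smallest case. First I would note that condition $C_{1}$ is carried over verbatim, since in Theorem \ref{GT2} it is already quantified over all $i,j$ with no restriction, and for a $2\times 2$ matrix this simply means $i,j \in \{1,2\}$.

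Next I would look at condition $C_{2}$ of the theorem, which is quantified over $i = 1,\ldots,n-1$ and $j = 1,\ldots,m-1$. With $n = m = 2$ the only admissible index pair is $(i,j) = (1,1)$, so the whole family of constraints collapses to the single equation $c_{11} - c_{12} = c_{21} - c_{22}$, equivalently $c_{11} + c_{22} = c_{12} + c_{21}$, which is precisely the $C_{2}$ of the corollary. Invoking Theorem \ref{GT2} then yields the corollary at once.

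If a self-contained derivation is preferred instead, one can solve directly the $4\times 4$ linear system $\delta^{B}_{i} - \delta^{A}_{j} = c_{ij}$ for $i,j \in \{1,2\}$ in the unknowns $\delta^{A}_{1}, \delta^{A}_{2}, \delta^{B}_{1}, \delta^{B}_{2}$: subtracting the $(1,2)$ equation from the $(1,1)$ equation and the $(2,2)$ equation from the $(2,1)$ equation both give the value of $\delta^{A}_{2} - \delta^{A}_{1}$, so consistency forces $c_{11} - c_{12} = c_{21} - c_{22}$; conversely, assuming that equality one may fix $\delta^{A}_{2}$ arbitrarily (reflecting the non-uniqueness of the generating offers) and solve for the other three unknowns, the fourth equation then being automatically satisfied. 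There is no real obstacle here: the $2\times 2$ case is just the degenerate instance of the Gauss-elimination argument in the proof of Theorem \ref{GT2}, in which the rank drops to $n + m - 1 = 3$ and the consistency requirement reduces to a single equation.
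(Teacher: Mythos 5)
Your proposal is correct and matches the paper's intent: the corollary is stated without proof precisely because it is the immediate specialization of Theorem \ref{GT2} to $n=m=2$, where $C_{2}$ collapses to the single equation $c_{11}+c_{22}=c_{12}+c_{21}$. Your optional direct solution of the $4\times 4$ system is also accurate and is just the degenerate instance of the Gauss-elimination argument already in the theorem's proof.
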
 
 
  \medskip 
 \begin{corollary}
No OI-transformation applied to a game matrix $M$ can produce a game matrix that differs 
from $M$ in only one outcome. 
\end{corollary}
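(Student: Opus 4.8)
The plan is to argue by contradiction using condition $C_{2}$ of Theorem~\ref{GT2}, which every OI-transformation must satisfy. Throughout we assume $n\ge 2$ and $m\ge 2$, as is implicit in the statement (for a degenerate $1\times m$ or $n\times 1$ matrix it is otherwise false, since the single offer $A\xlongrightarrow{\delta\slash B_{j_{0}}}B$ already alters exactly the outcome $(1,j_{0})$, resp.\ $(i_{0},1)$, and nothing else).

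Suppose, for contradiction, that $\widehat M\ne M$ is obtained from $M$ by an OI-transformation and agrees with $M$ in every outcome except $(i_{0},j_{0})$. Write $c_{ij}=\hat a_{ij}-a_{ij}$ as in Theorem~\ref{GT2}. Agreement in an outcome $(i,j)$ forces in particular $\hat a_{ij}=a_{ij}$, hence $c_{ij}=0$ for every $(i,j)\ne(i_{0},j_{0})$. The key step is then to apply $C_{2}$ to a $2\times2$ "window" straddling the exceptional cell: since $n,m\ge 2$ we may take $i=\min(i_{0},n-1)\in\{1,\dots,n-1\}$ and $j=\min(j_{0},m-1)\in\{1,\dots,m-1\}$, so that $i_{0}\in\{i,i+1\}$ and $j_{0}\in\{j,j+1\}$; thus $(i_{0},j_{0})$ is one of the four cells $(i,j),(i,j{+}1),(i{+}1,j),(i{+}1,j{+}1)$, and the remaining three of these carry $c=0$. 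Condition $C_{2}$ for this pair of indices, $c_{ij}+c_{(i+1)(j+1)}=c_{i(j+1)}+c_{(i+1)j}$, therefore collapses to $c_{i_{0}j_{0}}=0$ (the single possibly-nonzero term stands alone on one side, everything else vanishing). Hence $c_{ij}=0$ for \emph{all} $i,j$, i.e.\ $\widehat M$ and $M$ have identical $A$-payoffs; by $C_{1}$ (equal payoff sums) they then also have identical $B$-payoffs, so $\widehat M=M$, contradicting $\widehat M\ne M$.

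There is no real obstacle here beyond routine bookkeeping; the only points that need a little care are the exclusion of the degenerate one-row / one-column cases and the explicit choice of a window whose four corners include the altered cell. An equivalent presentation restricts the given OI-transformation to the subgame on rows $\{i,i+1\}$ and columns $\{j,j+1\}$ — which is again an OI-transformation — and invokes the preceding $2\times2$ Corollary, whose condition $c_{11}+c_{22}=c_{12}+c_{21}$ cannot hold when exactly one of the four $c$'s is nonzero.
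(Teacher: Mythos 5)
Your proof is correct and follows exactly the route the paper intends: the corollary is stated as an immediate consequence of Theorem~\ref{GT2}, and your application of condition $C_{2}$ to a $2\times2$ window containing the altered cell, followed by $C_{1}$ to kill the $B$-payoff difference, is precisely that argument. Your caveat about the degenerate $1\times m$ and $n\times 1$ cases (where $C_{2}$ is vacuous and the claim actually fails) is a valid observation that the paper leaves implicit.
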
 

Consequently, OI-transformations cannot simulate the offers considered in \cite{JW05} and \cite{EP11}, contingent on a single outcome. 
 
 \medskip 
\begin{example}
For example, the payoff matrix  
\[
 \left[\begin{array}{c|cc}
A\backslash  B & B_{1} & B_{2}  \\
\hline 
A_{1} &  4,4 &  0,5  \\
A_{2} &  3,0 & 1,1  
\end{array}\right]
\]
can be  OI-transformed to
\[
 \left[\begin{array}{c|cc}
A\backslash  B & B_{1} & B_{2}  \\
\hline 
A_{1} &  2,6 &  2,3  \\
A_{2} &  0,3 & 2,0  
\end{array}\right] 
\]
but not to 
\[
 \left[\begin{array}{c|cc}
A\backslash  B & B_{1} & B_{2}  \\
\hline 
A_{1} &  2,6 &  2,3  \\
A_{2} &  0,3 & 1,1  
\end{array}\right] 
\]
neither to 
\[
 \left[\begin{array}{c|cc}
A\backslash  B & B_{1} & B_{2}  \\
\hline 
A_{1} &  2,6 &  3,2  \\
A_{2} &  0,3 & 2,0  
\end{array}\right] 
\]
\end{example}

\medskip
The condition $C_{2}$ from Theorem \ref{GT2} can be rewritten as a recurrent formula 
$c_{(i+1)(j+1)} = c_{(i+1)j} +  c_{i(j+1)} -  c_{ij}$, 
which suggests that all values of $c_{ij}$, and therefore all values of $\hat{a}_{ij}$, can be computed iteratively from some initial values along one row and one column. Therefore, by using conditions $C_{1}$ and $C_{2}$, every OI-transformation can be determined \emph{locally}, by specifying the resulting payoffs for any of the two players on \emph{one row} and  \emph{one column}. In other words, fix a strategy profile in the transformed matrix, fix one of the players, and assign \emph{any} real payoffs for that player to all outcomes where at least one of the players follows his strategy from the fixed strategy profile. The resulting partly defined matrix can then be uniquely extended to one that can be obtained by an  OI-transformation from the initial matrix $M$. The following result formalizes that observation. 

\begin{theorem} 
\label{GT2a}
Let $M = \bigg[a_{ij}, b_{ij}\bigg]_{{i = 1,\ldots, n}\atop{j = 1,\ldots, m}}$. 
Then for every fixed $i \in 
\{1,\ldots, n\}$ and $j \in \{1,\ldots, m\}$, every tuple\footnote{Note that $\hat{a}_{ij}$ occurs twice in this list.}   
of $n+m-1$ reals $\hat{a}_{i1},
\ldots \hat{a}_{im},\hat{a}_{1j},
\ldots, \hat{a}_{nj}$ 
can be extended to a unique payoff matrix $\widehat{M} = \bigg[\hat{a}_{ij}, \hat{b}_{ij}\bigg]_{{i = 1,\ldots, n}\atop{j = 1,\ldots, m}} $ that can be obtained from $M$ by an OI-transformation.
\end{theorem}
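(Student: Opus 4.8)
The plan is to use Theorem~\ref{GT2} to reduce the claim to a purely linear‑algebraic fact about the single array of $A$‑payoffs. Fix $i,j$ and the prescribed reals, and write $c_{kl}=\hat a_{kl}-a_{kl}$ throughout. By condition $C_1$ of Theorem~\ref{GT2} the $B$‑payoffs of any OI‑transform are determined by its $A$‑payoffs via $\hat b_{kl}=a_{kl}+b_{kl}-\hat a_{kl}$; hence a matrix $\widehat M$ is an OI‑transform of $M$ with prescribed first coordinates $\hat a_{kl}$ exactly when the array $(c_{kl})$ satisfies condition $C_2$, i.e. $c_{kl}+c_{(k+1)(l+1)}=c_{k(l+1)}+c_{(k+1)l}$ for all $k<n$, $l<m$. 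So it suffices to show: the values $c_{il}$ ($l=1,\dots,m$) along the fixed row $i$ together with the values $c_{kj}$ ($k=1,\dots,n$) along the fixed column $j$ (which agree in the overlapping cell $(i,j)$) extend in exactly one way to an array $(c_{kl})$ satisfying $C_2$.

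The structural fact behind this is that $C_2$, stated only for adjacent $2\times 2$ blocks, is equivalent to the \emph{global} identity $c_{kl}+c_{k'l'}=c_{kl'}+c_{k'l}$ for all row indices $k,k'$ and column indices $l,l'$, and equivalently to $(c_{kl})$ having the separable form $c_{kl}=p_k+q_l$ for suitable reals $p_1,\dots,p_n,q_1,\dots,q_m$ — this is just the representation $c_{kl}=\delta^B_k-\delta^A_l$ already used in the proof of Theorem~\ref{GT2}. For existence I would simply define, for every cell $(k,l)$,
\[ c_{kl} := c_{kj} + c_{il} - c_{ij}, \]
reading $c_{kj},c_{il},c_{ij}$ off the prescribed data, and then set $\hat a_{kl}:=a_{kl}+c_{kl}$ and $\hat b_{kl}:=a_{kl}+b_{kl}-\hat a_{kl}$. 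Putting $l=j$ (resp. $k=i$) shows this array restricts to the prescribed values on column $j$ (resp. row $i$); $C_1$ holds by construction; and a one‑line substitution shows $C_2$ holds, since both of its sides equal $c_{kj}+c_{(k+1)j}+c_{il}+c_{i(l+1)}-2c_{ij}$. By Theorem~\ref{GT2}, $\widehat M=\big[\hat a_{kl},\hat b_{kl}\big]$ is then obtainable from $M$ by an OI‑transformation. (This is exactly the iterative extension "along one row and one column" described before the theorem, here in closed form.)

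For uniqueness, suppose $\widehat M'=\big[\hat a'_{kl},\hat b'_{kl}\big]$ is another OI‑transform of $M$ agreeing with the prescribed data on row $i$ and column $j$, and put $c'_{kl}=\hat a'_{kl}-a_{kl}$. Applying the global form of $C_2$ to the rows $k,i$ and columns $l,j$ gives $c'_{kl}=c'_{kj}+c'_{il}-c'_{ij}$; since $c'_{kj}=c_{kj}$, $c'_{il}=c_{il}$ and $c'_{ij}=c_{ij}$ by assumption, we get $c'_{kl}=c_{kl}$ for every cell, hence $\hat a'_{kl}=\hat a_{kl}$ and, by $C_1$, $\hat b'_{kl}=\hat b_{kl}$. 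Thus $\widehat M'=\widehat M$.

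The only genuine work is the "adjacent $\Rightarrow$ global" step for $C_2$; everything else is bookkeeping. That step is a routine double induction — first propagate equality of the mixed difference from neighbouring rows to arbitrary pairs of rows, then do the same across columns — and it is already implicit in the Gaussian‑elimination computation proving Theorem~\ref{GT2} (the passage "by consecutive applications of condition $C_2$"), so I would either invoke that or spell out the short induction. I expect this to be the main, though minor, obstacle; no other part requires more than substitution into $C_1$ and $C_2$.
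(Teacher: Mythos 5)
Your proof is correct and follows the same underlying idea as the paper — propagate the prescribed row-$i$ and column-$j$ data to the whole matrix using $C_{2}$, then invoke Theorem~\ref{GT2} — but your execution is cleaner. The paper fills in the entries iteratively along successive diagonals via the recurrence $c_{(i+1)(j+1)} = c_{(i+1)j} + c_{i(j+1)} - c_{ij}$ and asserts that $C_{2}$ and uniqueness hold ``by construction''; you instead write the extension in closed form, $c_{kl} = c_{kj} + c_{il} - c_{ij}$, which makes the verification of $C_{2}$ a one-line substitution and makes uniqueness immediate once the global identity $c_{kl} + c_{k'l'} = c_{kl'} + c_{k'l}$ is available. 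You correctly flag that global identity as the only step needing an argument; note that for the uniqueness direction you can bypass the ``adjacent $\Rightarrow$ global'' induction entirely, since $\widehat{M}'$ is assumed to be an OI-transform of $M$ and hence $c'_{kl} = \delta^{B}_{k} - \delta^{A}_{l}$ for some reals, from which $c'_{kl} = c'_{kj} + c'_{il} - c'_{ij}$ follows by direct cancellation — exactly the shortcut you mention. The closed form buys a fully explicit, checkable proof where the paper leaves the bookkeeping implicit; the paper's iterative phrasing buys a description that generalizes more directly to the $N$-person case of Theorem~\ref{GTaN}.
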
 

\begin{proof}

For notational simplicity, let us assume that $i=1$ and $j=1$ and that $m\leq n$. Clearly, the argument for any other combination of $i,j$ is analogous.  In order to determine the matrix 
$\widehat{M}$ it  suffices to determine the values of all $\hat{a}_{ij}$, for ${i = 1,\ldots, n}, {j = 1,\ldots, m}$ and then compute all values $\hat{b}_{ij}$ by applying condition $C_{1}$ from Theorem \ref{GT2}. 

Now, note that all values $\hat{a}_{ij}$  can be computed iteratively, step-by-step, by using the identities in $C_{2}$ of Theorem \ref{GT2}: first, compute $\hat{a}_{22}$; then,  
$\hat{a}_{23}$ and  $\hat{a}_{32}$, etc.. More precisely,  
given all values along the diagonal $\hat{a}_{1k},\hat{a}_{2(k-1)}\ldots,\hat{a}_{k1}$, for $k<n$, using  $C_{2}$ of Theorem \ref{GT2} one can compute uniquely values for $\hat{a}_{2(k+1)},\hat{a}_{2k}\ldots,\hat{a}_{(k+1)2}$. When $k$ increases between $m$ and $n$, the argument continues likewise, but for values along the diagonals $\hat{a}_{km},\ldots,\hat{a}_{2(m+k-2)}$, and then further, along the shrinking diagonals $\hat{a}_{km},\ldots,\hat{a}_{j(m+k-j)}$, until eventually $\hat{a}_{nm}$ is computed. 

The resulting matrix  $\widehat{M} = \bigg[\hat{a}_{ij}, \hat{b}_{ij}\bigg]_{{i = 1,\ldots, n}\atop{j = 1,\ldots, m}} $ satisfies the conditions $C_{1}$ and $C_{2}$ by construction. Therefore, by Theorem \ref{GT2}, $\widehat{M}$ can be obtained from $M$ by an OI-transformation. 
The uniqueness of $\widehat{M}$ follows from the construction, too. 

The case of arbitrary $i$ and $j$ is essentially the same, but the computation of the values of the $\hat{a}_{pq}$s  now propagates from $\hat{a}_{ij}$ in all 4 diagonal directions.
\end{proof}

Thus, in summary, theorems \ref{GT2} and \ref{GT2a} together say that any payoff matrix $\widehat{M}$ can be obtained  from matrix $M$ by an OI-transformation by choosing suitable payoffs in one row and one column satisfying condition $C_{1}$, and then computing the rest by using the recurrent formulae derived from condition $C_{2}$.

\begin{example}
Suppose the starting payoff matrix is 
\[ M = 
 \left[\begin{array}{c|ccc}
A\backslash  B & B_{1} & B_{2} & B_{3}  \\
\hline 
A_{1} &  4,4 &  6,2 &  0,6  \\
A_{2} &  2,6 & 1,1 &  2,2  \\ 
A_{3} &  5,0 & 0,1 &  1,5  \\ 
A_{4} &  0,0 & 2,3 &  3,0  \\ 
\end{array}\right]
\]
and row 1 and column 1 of the transformed matrix are as follows: 

\[ \widehat{M}  = 
 \left[\begin{array}{c|ccc}
A\backslash  B & B_{1} & B_{2} & B_{3}  \\
\hline 
A_{1} &  1,7 &  4,4 &  2,4  \\
A_{2} &  7,1 &        &    \\ 
A_{3} &  3,2 &        &    \\ 
A_{4} &  0,0 &        &    \\ 
\end{array}\right]
\]
The remaining entries are then computed consecutively from condition $C_{2}$ of Theorem \ref{GT2}: \\
First, 
we obtain 
$c_{22} =  c_{21} + c_{12} - c_{11}  $ = $ 5 + (-2) - (-3) $ = 6. \\
Then:  
$c_{23} =  c_{22} + c_{13} - c_{12} $ =  $ 6 + 2 - (-2)$ = 10; 
$c_{32} =  c_{31} + c_{22} - c_{21} $ = $ -2  + 6 - 5$ = -1, 
etc. Thus, the whole matrix $C = \big( c_{ij} \big)_{{i = 1,\ldots, 4}\atop{j = 1,\ldots, 3}}$ is computed: 
\[ C = 
 \left[\begin{array}{c|rrr}
i\backslash  j & 1 & 2 & 3  \\
\hline 
1 &  -3 &  -2 &  2  \\
2 &  5  &   6  &  10  \\ 
3 &  -2 &  -1  &  3  \\ 
4 &  0  &   1  &  5  \\ 
\end{array}\right]
\]
Eventually, from the definition of $c_{ij}$ and Condition $C_{1}$ of Theorem \ref{GT2}, we obtain: 
\[ \widehat{M}  = 
 \left[\begin{array}{c|ccc}
A\backslash  B & B_{1} & B_{2} & B_{3}  \\
\hline 
A_{1} &  1,7 &  4,4 &     2,4  \\
A_{2} &  7,1 &  7,-5 & 12,-8  \\ 
A_{3} &  3,2 &   -1,2   & 4,2   \\ 
A_{4} &  0,0 &   3, 2   & 8,-5   \\ 
\end{array}\right]
\]
\end{example}

\section{Characterizing the OI-transformations of $N$-person normal form games}

Generalizing these results to $N$-person NF games is relatively easy, but adds a substantial notational overhead. 

Let the players be indexed with $\{1,2,\ldots N\}$ and consider two NF game matrices of the same dimensions: $m_{1}\times m_{2}\times\ldots \times m_{N}$: 
\[M = \Big(\langle a^{1}_{i_{1}i_{2}\ldots i_{N}}, a^{2}_{i_{1}i_{2}\ldots i_{N}}, \ldots a^{N}_{i_{1}i_{2}\ldots i_{N}} \rangle \Big)_{{i_{k} \leq m_{k}}\atop{k=1,2,\ldots N}}\] 
and 
\[\widehat{M} = \Big(\langle \hat{a}^{1}_{i_{1}i_{2}\ldots i_{N}}, \hat{a}^{2}_{i_{1}i_{2}\ldots i_{N}}, \ldots \hat{a}^{N}_{i_{1}i_{2}\ldots i_{N}} \rangle \Big)_{{i_{k} \leq m_{k}}\atop{k=1,2,\ldots N}}
\]

Let $c^{k}_{i_{1}i_{2}\ldots i_{N}} = \hat{a}^{k}_{i_{1}i_{2}\ldots i_{N}} - a^{k}_{i_{1}i_{2}\ldots i_{N}}$.

\begin{theorem} 
\label{GTN}
Let $M$, $\widehat{M}$ be $N$-person NF game matrices  of dimensions $m_{1}\times m_{2}\times\ldots \times m_{N}$. 
The matrix $\widehat{M}$ can be obtained from $M$ by an OI-transformation if and only if the following two conditions hold: 

\begin{enumerate}
\item[C$^{N}_{1}$:] $\sum_{j=1}^{N}  \hat{a}^{j}_{i_{1}i_{2}\ldots i_{N}} = \sum_{j=1}^{N} a^{j}_{i_{1}i_{2}\ldots i_{N}}$, for any ${i_{k} \leq m_{k}, \ k=1,2,\ldots N}$. 

(The sum of all payoffs in any given outcome must remain the same.)
 
\item[C$^{N}_{2}$:]  For any fixed $k,j=1,2,\ldots N$ the difference 
\[d^{j,k}_{i_{1}\ldots i_{N}}:= c^{j}_{i_{1}\ldots i_{k}+1\ldots i_{N}} - c^{j}_{i_{1}\ldots i_{k}\ldots i_{N}}\] 
is the same for every 
$i_{1},\ldots, i_{j}, \ldots, i_{N}$ 
such that $i_{p} \leq m_{p}, \ p=1,\ldots, N$ and 
$i_{j} < m_{j}$.
\end{enumerate}
\end{theorem}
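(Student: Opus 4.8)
The plan is to reduce Theorem \ref{GTN} to the same linear-algebra argument used for Theorem \ref{GT2}, exploiting the fact that, as in the 2-person case, an OI-transformation is entirely determined by how it redistributes payoffs, and that condition C$^{N}_{1}$ lets us focus on a single distinguished player. First I would record that a general OI-transformation of an $N$-person game in canonical form consists of offers $k \xrightarrow{\delta/s} \ell$ for ordered pairs of players and pure strategies $s$ of the recipient $\ell$; such an offer subtracts $\delta$ from player $k$'s payoff and adds $\delta$ to player $\ell$'s payoff in exactly those outcomes where $\ell$ plays $s$. Summing the offers into player $\ell$ and the offers out of player $\ell$, the net change to player $\ell$'s payoff in an outcome $(i_1,\dots,i_N)$ has the shape
\[
c^{\ell}_{i_1\ldots i_N} \;=\; \sum_{k\neq \ell} \Big( \varepsilon^{k\to\ell}_{i_\ell} \;-\; \varepsilon^{\ell\to k}_{i_k} \Big),
\]
i.e. a sum of terms each depending on the strategy index of a single player. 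This ``additive-separability along coordinate axes'' is exactly the structural feature that C$^{N}_{2}$ is meant to capture: the discrete mixed second difference of $c^{j}$ in any two distinct coordinate directions must vanish, which is what the statement ``$d^{j,k}_{i_1\ldots i_N}$ is independent of $i_1,\dots,\widehat{i_k},\dots$'' encodes when read together with the analogous independence in the other directions.

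For necessity I would argue exactly as in Theorem \ref{GT2}: C$^{N}_{1}$ is immediate since every offer only transfers utility between two players and hence preserves the outcome-wise total; for C$^{N}_{2}$ I would take the explicit separable form of $c^{j}$ displayed above and compute $d^{j,k}_{i_1\ldots i_N} = c^{j}_{\ldots i_k+1\ldots} - c^{j}_{\ldots i_k\ldots}$, observing that every summand not involving coordinate $k$ cancels, so the difference depends only on $i_k$ (and on $j,k$) — in particular it does not depend on the remaining indices, which is the asserted invariance. For sufficiency, I would fix the convention that it suffices to realize the target changes $c^{j}$ for players $j = 1,\dots,N-1$, since C$^{N}_{1}$ then forces the change for player $N$ automatically; and I would set up, for each such $j$, a system of linear equations in the unknown offer amounts. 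The cleanest route is to show directly that C$^{N}_{2}$ implies each $c^{j}$ is coordinate-wise additively separable, $c^{j}_{i_1\ldots i_N} = \sum_{k} f^{j,k}(i_k) + \text{const}$, by integrating the second-difference conditions along axes starting from a fixed base outcome (the $N$-dimensional analogue of the ``one row and one column'' propagation in Theorem \ref{GT2a}); once that decomposition is in hand, reading off the offer amounts $\delta^{k\to\ell}_{\bullet}$ from the functions $f^{j,k}$ is a bookkeeping exercise, with the over-determined system being consistent precisely because the separable form was built to satisfy every equation.

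The main obstacle I anticipate is the bookkeeping of sufficiency, not any deep idea: in the 2-person case there were only two ``directions'' of offers ($A$-to-$B$ and $B$-to-$A$) and the Gauss-elimination tableau was explicit; with $N$ players there are $N(N-1)$ families of offer amounts and $N-1$ separate target matrices $c^{1},\dots,c^{N-1}$ to hit simultaneously, so one must check that the functions $f^{j,k}$ extracted from the separable decompositions of the various $c^{j}$ can be assembled into a single consistent assignment of offer amounts — concretely, the offer $k\xrightarrow{\;\cdot\;} \ell$ contributes $+$ to $c^{\ell}$ and $-$ to $c^{k}$, so its amount is constrained by two of the decompositions at once, and one needs C$^{N}_{1}$ (equivalently $\sum_j c^{j} = 0$ outcome-wise) to guarantee these constraints are compatible. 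I would handle this by writing $c^{\ell}_{i_1\ldots i_N}$ in its separable form, isolating for each ordered pair $(k,\ell)$ the part of $c^{\ell}$ that ``should'' come from offers into $\ell$ contingent on player $k$'s... — wait, the indexing is that offers into $\ell$ are contingent on $\ell$'s own strategy; so $c^{\ell}$'s dependence on $i_\ell$ comes from incoming offers and its dependence on $i_k$ ($k\neq\ell$) comes from outgoing offers $\ell\to k$. This makes the assignment almost forced: the $i_\ell$-part of the separable form of $c^{\ell}$, summed appropriately, determines the incoming offers, and matching these against the $i_\ell$-parts appearing (with a minus sign) in the other $c^{k}$'s is exactly where C$^{N}_{1}$ is used. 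Finally I would note, as in Theorem \ref{GT2}, that there is slack — the system is under-determined once consistent — so a solution exists, completing the proof. I expect the write-up to mirror the 2-person proof closely, with the $N$-dimensional propagation argument of Theorem \ref{GT2a} doing the work that explicit Gauss elimination did there, and the only genuinely new point being the compatibility check just described.
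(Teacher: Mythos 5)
Your proposal is correct, and for the necessity direction it is essentially the paper's own argument: both write the payoff change of player $j$ in the separable form $c^{j}_{i_1\ldots i_N}=\sum_{i}\delta^{i,j}_{i_j}-\sum_{k}\delta^{j,k}_{i_k}$ and observe that taking a difference in direction $k$ kills every term not depending on $i_k$, so the mixed second differences vanish (the paper does this by an explicit three-case computation, you by the structural remark; same content). Where you genuinely diverge is sufficiency: the paper merely writes down the linear system $\sum_{i}\delta^{i,j}_{i_j}-\sum_{k}\delta^{j,k}_{i_k}=c^{j}_{i_1\ldots i_j\ldots i_N}$ and asserts, omitting ``the routine, but messy technicalities,'' that elementary row operations show it is consistent, whereas you propose to integrate the second-difference conditions from a base outcome to get an additive decomposition $c^{j}_{i_1\ldots i_N}=\sum_k f^{j,k}(i_k)+\mathrm{const}$, read off $\delta^{j,k}_{i_k}$ from $-f^{j,k}(i_k)$ for $k\neq j$, and then verify that the $i_j$-part of $c^{j}$ is matched by the incoming offers. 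You correctly isolate the one non-trivial point the paper's sketch hides: each offer amount appears in two players' equations, and the resulting constraint is that $\sum_{i}f^{i,j}(i_j)$ be constant in $i_j$, which is exactly what C$^{N}_{1}$ (i.e.\ $\sum_j c^j\equiv 0$, read coordinate-wise through the separable forms) delivers, with the residual constants absorbed into the free additive shifts of the offers. So your route buys an explicit, checkable consistency proof where the paper gestures at Gauss elimination; the paper's route, had it been carried out, would produce the same decomposition implicitly. One small caution for the write-up: the condition C$^{N}_{2}$ as printed must be read (as you do) as invariance of $d^{j,k}$ under changes of all indices \emph{other than} $i_k$ — the difference may still depend on $i_k$ itself — and the bound should be $i_k<m_k$ rather than $i_j<m_j$; make that reading explicit so your separability claim is the correct discrete ``vanishing mixed second differences'' statement.
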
 

\begin{proof}

The proofs follows the same reasoning as in the 2-person case.

The necessity of condition C$^{N}_{1}$ is  obvious. 
To show the necessity of condition C$^{N}_{2}$, we do again simple calculations. Let 
$\widehat{M}$ be obtained from $M$ by an OI-transformation effected by side payment offers $\{ \delta^{i,k}_{j} \}_{k = 1,\ldots, N; 1 \leq j \leq m_{j}}$, where $\delta^{i,k}_{j}$ is the side payment offered by player $i$ to player 
$k\neq i$ contingent on $k$ playing action $j$. For technical convenience we also put $\delta^{i,i}_{j}=0$ for any $j$. Then we have: 
\[ c^{j}_{i_{1}\ldots i_{j}\ldots i_{N}} = \hat{a}^{j}_{i_{1}\ldots i_{j}\ldots i_{N}} - a^{j}_{i_{1}\ldots i_{j}\ldots i_{N}} =
\sum_{i=1}^{N} \delta^{i,j}_{i_{j}} - \sum_{k=1}^{N} \delta^{j,k}_{i_{k}}.
 \] 
Note that it suffices to show that 
\[d^{j,k}_{i_{1}\ldots i_{p}\ldots i_{N}} = d^{j,k}_{i_{1}\ldots i_{p}+1 \ldots i_{N}}\]
for any $j,k,p =1,\ldots, N$, i.e., 
\[c^{j}_{i_{1}\ldots i_{k}+1 \ldots i_{p} \ldots i_{N}} - c^{j}_{i_{1}\ldots i_{k} \ldots i_{p} \ldots i_{N}} =
c^{j}_{i_{1}\ldots i_{k}+1\ldots i_{p}+1\ldots i_{N}} - c^{j}_{i_{1}\ldots i_{k}\ldots i_{p}+1 \ldots i_{N}} 
\]
for $k < p \leq N$ and likewise
\[c^{j}_{i_{1}\ldots i_{p} \ldots i_{k}+1 \ldots i_{N}} - c^{j}_{i_{1}\ldots i_{p} \ldots i_{k} \ldots i_{N}} =
c^{j}_{i_{1}\ldots i_{p}+1 \ldots i_{k}+1\ldots i_{N}} - c^{j}_{i_{1}\ldots i_{p}+1 \ldots i_{k} \ldots i_{N}} 
\]
for $1 \leq p < k$. Both cases are completely analogous, so let us check the first equality. It is equivalent to 
\[c^{j}_{i_{1}\ldots i_{k}+1 \ldots i_{p} \ldots i_{N}} + 
c^{j}_{i_{1}\ldots i_{k}\ldots i_{p}+1 \ldots i_{N}} =
c^{j}_{i_{1}\ldots i_{k}+1\ldots i_{p}+1\ldots i_{N}} +
c^{j}_{i_{1}\ldots i_{k} \ldots i_{p} \ldots i_{N}} 
\]

First, suppose $j \neq p,k$. 
By definition, we have for the left hand side: 

$c^{j}_{i_{1}\ldots i_{k}+1 \ldots i_{p} \ldots i_{N}} + 
c^{j}_{i_{1}\ldots i_{k}\ldots i_{p}+1 \ldots i_{N}} =$

$\sum_{{q=1}}^{N} \delta^{q,j}_{i_{j}} - \sum_{{q=1}\atop{q \neq k}}^{N} \delta^{j,q}_{i_{q}} - \delta^{j,k}_{i_{k}+1}$ + 
$\sum_{{q=1}}^{N} \delta^{q,j}_{i_{j}} - \sum_{{q=1}\atop{q \neq p}}^{N} \delta^{j,q}_{i_{q}} - \delta^{j,p}_{i_{p}+1}$.

\medskip
Respectively, for the right hand side: 
\medskip

$c^{j}_{i_{1}\ldots i_{k}+1\ldots i_{p}+1\ldots i_{N}} +
c^{j}_{i_{1}\ldots i_{k} \ldots i_{p} \ldots i_{N}} =$
\medskip

$\sum_{{q=1}}^{N} \delta^{q,j}_{i_{j}} - \sum_{{q=1}\atop{q \neq k,p}}^{N} \delta^{j,q}_{i_{q}} - \delta^{j,p}_{i_{p}+1} - \delta^{j,k}_{i_{k}+1} + 
\sum_{{q=1}}^{N} \delta^{q,j}_{i_{j}} - \sum_{{q=1}}^{N} \delta^{j,q}_{i_{q}}$.

A direct inspection shows that these are equal. 

\medskip
Now, consider the case where $j=p$.
For the left hand side we get: 
\medskip

$c^{j}_{i_{1}\ldots i_{k}+1 \ldots i_{j} \ldots i_{N}} + 
c^{j}_{i_{1}\ldots i_{k}\ldots i_{j}+1 \ldots i_{N}} =$
\medskip

$\sum_{{q=1}}^{N} \delta^{q,j}_{i_{j}} - \sum_{{q=1}\atop{q \neq k}}^{N} \delta^{j,q}_{i_{q}} - \delta^{j,k}_{i_{k}+1}$ + 
$\sum_{{q=1}}^{N} \delta^{q,j}_{i_{j}+1} - \sum_{{q=1}\atop{q \neq j}}^{N} \delta^{j,q}_{i_{q}} - \delta^{j,j}_{i_{j}+1}$.

\medskip
Respectively, for the right hand side: 
\medskip

$c^{j}_{i_{1}\ldots i_{k}+1\ldots i_{j}+1\ldots i_{N}} +
c^{j}_{i_{1}\ldots i_{k} \ldots i_{j} \ldots i_{N}} =$
\medskip

$\sum_{{q=1}}^{N} \delta^{q,j}_{i_{j}+1} - \sum_{{q=1}\atop{q \neq k,j}}^{N} \delta^{j,q}_{i_{q}} - \delta^{j,j}_{i_{j}+1} - \delta^{j,k}_{i_{k}+1} + 
\sum_{{q=1}}^{N} \delta^{q,j}_{i_{j}} - \sum_{{q=1}}^{N} \delta^{j,q}_{i_{q}}$.

\medskip
Again, by direct inspection we see that these are equal. 

\medskip
Finally, the case where $j=k$ is completely analogous. 

\medskip
Now, suppose conditions C$^{N}_{1}$ and C$^{N}_{2}$ hold for the matrices $M$ and $\widehat{M}$, and all parameters $c^{j}_{i_{1}\ldots i_{j}\ldots i_{N}}$ are defined as before.
As in the 2-person case, we can show that the system of equations
\[ \Bigg\|
\sum_{i=1}^{N} \delta^{i,j}_{i_{j}} - \sum_{k=1}^{N} \delta^{j,k}_{i_{k}} = c^{j}_{i_{1}\ldots i_{j}\ldots i_{N}} 
\]
for all 
$j, i_{1},\ldots, i_{j}, \ldots, i_{N}$ 
such that  $i_{p} \leq m_{p}, \ p=1,\ldots, N$,  
 for the unknown real payments $\{ \delta^{i,k}_{j} \}_{k = 1,\ldots, N; 1 \leq j \leq m_{j}}$, 
is consistent. 
We can use again standard linear algebra and show that elementary matrix transformations would reduce the system to a consistent one in a canonical form. We omit the routine, but messy technicalities. 
\end{proof}

\medskip
As in the 2-person case, every OI-transformation can be determined by the payoffs for all players in the outcomes along the rows in all coordinate directions passing from any fixed outcome, that is, all outcomes resulting from all but one players following a fixed strategy profile. Here is the formal result. 

\begin{theorem} 
\label{GTaN}
Let 
\[M = \Big(\langle a^{1}_{i_{1}i_{2}\ldots i_{N}}, a^{2}_{i_{1}i_{2}\ldots i_{N}}, \ldots a^{N}_{i_{1}i_{2}\ldots i_{N}} \rangle \Big)_{{i_{k} \leq m_{k}}\atop{k=1,2,\ldots N}}
\]
and 
$\langle p_{1} \ldots p_{N}\rangle$ be a fixed tuple such that $p_{k}  \in 
\{1,\ldots, m_{k}\}$ for each $k=1,\ldots N$. Then every tuple\footnote{Note that each $\hat{a}^{k}_{p_{1}p_{2}\ldots p_{N}}$ occurs $N$ times in this list.}   \\ \\
$\mathcal{T} =  \bigg< \langle \hat{a}^{k}_{1p_{2}\ldots p_{N}}, \hat{a}^{k}_{2p_{2}\ldots p_{N}}, \ldots, \hat{a}^{k}_{m_{1}p_{2}\ldots p_{N}}$,  
$\hat{a}^{k}_{p_{1}1\ldots i_{N}}, \hat{a}^{k}_{p_{1}2\ldots i_{N}}, \ldots, \hat{a}^{k}_{p_{1}m_{2}\ldots p_{N}},  \ldots, $ \\  
$\hat{a}^{k}_{p_{1}p_{2}\ldots 1}, \hat{a}^{k}_{p_{1}p_{2}\ldots 2}, \ldots, \hat{a}^{k}_{p_{1}p_{2}\ldots m_{N}} \rangle \mid k = 1,\ldots, N \bigg>$ \\
\\ \\
satisfying condition C$^{N}_{1}$ of Theorem \ref{GTN} can be extended to a unique payoff matrix 
\[\widehat{M} = \Big(\langle \hat{a}^{1}_{i_{1}i_{2}\ldots i_{N}}, \hat{a}^{2}_{i_{1}i_{2}\ldots i_{N}}, \ldots \hat{a}^{N}_{i_{1}i_{2}\ldots i_{N}} \rangle \Big)_{{i_{k} \leq m_{k}} \atop {k=1,2,\ldots N}}\] 
that can be obtained from $M$ by an OI-transformation. 
\end{theorem}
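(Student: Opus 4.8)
The plan is to mimic the $2$-person argument of Theorem \ref{GT2a}, replacing the diagonal propagation by a single closed interpolation formula, and then to invoke Theorem \ref{GTN}. As in that proof, it suffices to produce the difference arrays $c^{j}_{i_{1}\ldots i_{N}} = \hat a^{j}_{i_{1}\ldots i_{N}} - a^{j}_{i_{1}\ldots i_{N}}$ for every player $j$ and every outcome, and then set $\hat a^{j} = a^{j} + c^{j}$; the matrix $\widehat M$ is thereby determined, and by Theorem \ref{GTN} it is OI-reachable from $M$ exactly when the $c^{j}$ so constructed satisfy C$^{N}_{1}$ and C$^{N}_{2}$. Note that the tuple $\mathcal{T}$ prescribes precisely the values $\hat a^{j}$, hence the values $c^{j}$, on all \emph{coordinate-line cells} through $\langle p_{1},\ldots,p_{N}\rangle$, i.e.\ the outcomes $i_{1}\ldots i_{N}$ for which $i_{t}=p_{t}$ for all but at most one index $t$.

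First I would extract from C$^{N}_{2}$ the statement that, for any two distinct coordinates $k,l$, the \emph{mixed second difference} of $c^{j}$ in the directions $k$ and $l$ vanishes: this is immediate, since C$^{N}_{2}$ says the first difference in direction $k$ does not depend on the remaining coordinates, in particular not on coordinate $l$. Iterating this across coordinate pairs — an induction on the size of the support set $\{t : i_{t}\neq p_{t}\}$, with the $2$-person rectangle identity C$_{2}$ of Theorem \ref{GT2} as base case — yields the multilinear interpolation formula
\[ c^{j}_{i_{1}\ldots i_{N}} \;=\; \sum_{k=1}^{N} c^{j}_{p_{1}\ldots p_{k-1}\, i_{k}\, p_{k+1}\ldots p_{N}} \;-\; (N-1)\, c^{j}_{p_{1}\ldots p_{N}}, \]
whose right-hand side involves only coordinate-line cells. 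I would then simply \emph{define} $c^{j}$ by this formula, reading the values on the right off $\mathcal{T}$, and set $\widehat M$ accordingly.

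It then remains to check that this $\widehat M$ meets the hypotheses of Theorem \ref{GTN}. For C$^{N}_{2}$: in the formula for $c^{j}$, incrementing $i_{k}$ changes only the $k$-th summand, so the first difference in direction $k$ equals $c^{j}_{p_{1}\ldots (i_{k}+1)\ldots p_{N}} - c^{j}_{p_{1}\ldots i_{k}\ldots p_{N}}$, which is manifestly independent of the other coordinates — exactly C$^{N}_{2}$. For C$^{N}_{1}$: summing the formula over $j$ and using that $\mathcal{T}$ satisfies C$^{N}_{1}$ on every coordinate-line cell (so $\sum_{j} c^{j}=0$ there), each bracketed term on the right contributes $0$, hence $\sum_{j} c^{j}_{i_{1}\ldots i_{N}} = 0$ at every outcome, which is C$^{N}_{1}$. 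By Theorem \ref{GTN}, $\widehat M$ is obtainable from $M$ by an OI-transformation. Uniqueness is then free: any OI-reachable matrix extending $\mathcal{T}$ has difference arrays satisfying C$^{N}_{2}$ by Theorem \ref{GTN}, and the derivation above shows C$^{N}_{2}$ forces the interpolation formula, so all entries are already pinned down by $\mathcal{T}$.

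The hard part will be the inductive derivation of the interpolation formula from C$^{N}_{2}$ — i.e.\ showing that "every single-coordinate difference is constant in the other coordinates" propagates to "every entry is the stated multilinear combination of its coordinate-line projections"; everything afterward (the two verifications and uniqueness) is short bookkeeping. One small point worth stating explicitly is the convention (from the footnote) that the repeated occurrences of $\hat a^{k}_{p_{1}\ldots p_{N}}$ in $\mathcal{T}$ denote a single value, and that the right-hand side of the formula indeed only references cells of support at most $1$, so that the construction is well defined.
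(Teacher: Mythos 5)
Your proof is correct, and it follows the same overall skeleton as the paper's (reduce everything to the difference arrays $c^{j}$, use C$^{N}_{2}$ to determine them from their values on the coordinate lines through $\langle p_{1},\ldots,p_{N}\rangle$, then invoke Theorem \ref{GTN} for OI-reachability), but the execution is genuinely different and in fact tighter. The paper extends the data \emph{iteratively}, propagating the recurrences $c^{j}_{\ldots i_{k}+1\ldots i_{p}+1\ldots}=c^{j}_{\ldots i_{k}\ldots i_{p}+1\ldots}+c^{j}_{\ldots i_{k}+1\ldots i_{p}\ldots}-c^{j}_{\ldots i_{k}\ldots i_{p}\ldots}$ outward along diagonals from the base outcome, and explicitly leaves out the "tedious details" of checking that this sweep is consistent and that the result satisfies C$^{N}_{1}$ and C$^{N}_{2}$. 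You instead solve the recurrence in closed form, via the multilinear interpolation identity $c^{j}_{i_{1}\ldots i_{N}}=\sum_{k}c^{j}_{p_{1}\ldots i_{k}\ldots p_{N}}-(N-1)c^{j}_{p_{1}\ldots p_{N}}$, which you correctly derive from C$^{N}_{2}$ by telescoping the vanishing mixed second differences (induction on the support size $|\{t: i_{t}\neq p_{t}\}|$ works exactly as you sketch, and the base case is really support $\le 1$, where the formula is a tautology and simultaneously shows the constructed matrix agrees with $\mathcal{T}$ — worth one explicit sentence). This buys you three things the paper glosses over: the verifications of C$^{N}_{1}$ and C$^{N}_{2}$ for the constructed $\widehat M$ become one-line computations on the formula; well-definedness is immediate since the right-hand side references only cells prescribed by $\mathcal{T}$; and uniqueness is a clean consequence rather than an appeal to "the construction", since any OI-reachable extension satisfies C$^{N}_{2}$ by Theorem \ref{GTN} and is therefore forced to obey the same formula. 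The only point to state carefully is the reading of C$^{N}_{2}$ as "the first difference in direction $k$ is independent of the coordinates other than $i_{k}$" (it may still depend on $i_{k}$ itself), which is the interpretation consistent with the paper's necessity computation and is exactly what your derivation uses.
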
 

\begin{proof} Given any tuple of values  $\mathcal{T}$  satisfying the conditions of the theorem, the extension to a matrix $\widehat{M}$ can be done  as follows. 
First, we use $\mathcal{T}$ and the entries of $M$ to compute the values of all $c^{j}_{i_{1}\ldots  i_{N}}$ whose index vectors  corresponding to the entries in $\mathcal{T}$.

We take the identities 
\[d^{j,k}_{i_{1}\ldots i_{p}\ldots i_{N}} = d^{j,k}_{i_{1}\ldots i_{p}+1 \ldots i_{N}}
\mbox{ for all } j,k,p =1,\ldots, N\] 
and expand them (assuming e.g., that $k < p \leq N$):  
\[c^{j}_{i_{1}\ldots i_{k}+1 \ldots i_{p} \ldots i_{N}} - c^{j}_{i_{1}\ldots i_{k} \ldots i_{p} \ldots i_{N}} =
c^{j}_{i_{1}\ldots i_{k}+1\ldots i_{p}+1\ldots i_{N}} - c^{j}_{i_{1}\ldots i_{k}\ldots i_{p}+1 \ldots i_{N}} 
\]
Then we rewrite them as: 
\[
c^{j}_{i_{1}\ldots i_{k}+1\ldots i_{p}+1\ldots i_{N}} 
=  c^{j}_{i_{1}\ldots i_{k}\ldots i_{p}+1 \ldots i_{N}} 
+ c^{j}_{i_{1}\ldots i_{k}+1 \ldots i_{p} \ldots i_{N}} - 
c^{j}_{i_{1}\ldots i_{k} \ldots i_{p} \ldots i_{N}}
\]
These are recurrent formulae computing the values of all $c^{j}_{i_{1}\ldots i_{N}}$ from those computed initially by propagating from $c^{j}_{p_{1}\ldots p_{N}}$ in all  diagonal directions. We leave the tedious details out.  
Once all $c^{j}_{i_{1}\ldots i_{N}}$ are computed, the matrix $\widehat{M}$ is determined. By construction it satisfies  conditions C$^{N}_{1}$ and C$^{N}_{2}$  of  Theorem \ref{GTN}, hence it can be obtained from $M$ by an OI-transformation. 
 The uniqueness of $\widehat{M}$ follows immediately. 
\end{proof}

\begin{example}
Consider a 3-person game with players $A_{1},A_{2},A_{3}$ of dimensions $3 \times 3 \times 2$, with 2-dimensional matrix-slices for the 2 actions of player $A_{3}$  as follows: 
\[ M_{1} = 
 \left[\begin{array}{c|ccc}
A_{1}\backslash  A_{2} & A_{211} & A_{221} & A_{231}  \\
\hline 
A_{111} &  1,2,0 &  2,3,1 &  3,1,2  \\
A_{121} &  2,3,3 & 3,4,4 &  4,2,5  \\ 
A_{131} &  6,5,6 & 7,6,7 &  5,7,8  \\ 
\end{array}\right], 
\ \ \ 
 M_{2} = 
 \left[\begin{array}{c|ccc}
A_{1}\backslash  A_{2} & A_{212} & A_{222} & A_{232}  \\
\hline 
A_{112} &  1,1,8 &  2,2,7 &  3,3,6  \\
A_{122} &  1,2,5 & 2,3,4 &  3,4,3  \\ 
A_{132} &  2,1,2 & 3,2,1 &  1,3,0  \\ 
\end{array}\right]
\]
Suppose the tuple of entries for the transformed matrix is given in terms of the outcome $(2,2,1)$ as follows: 
\[ \widehat{M}^{0}_{1} = 
 \left[\begin{array}{c|ccc}
A_{1}\backslash  A_{2} & A_{211} & A_{221} & A_{231}  \\
\hline 
A_{111} &               &  1,2,3 &            \\
A_{121} &   4,4,0   &  5,1,5 &  3,4,4  \\ 
A_{131} &               &  8,4,8 &             \\ 
\end{array}\right], 
\ \ \ 
 \widehat{M}^{0}_{2} = 
 \left[\begin{array}{c|ccc}
A_{1}\backslash  A_{2} & A_{212} & A_{222} & A_{232}  \\
\hline 
A_{112} &   &   &    \\
A_{122} &  & 3,3,3 &    \\ 
A_{132} &   &  &    \\ 
\end{array}\right]
\]
Note that condition C$^{N}_{1}$ of Theorem \ref{GTN} is satisfied. 

The corresponding partial slices of the matrix of differences 
$ c^{j}_{i_{1} i_{2} i_{3}} = \hat{a}^{j}_{i_{1} i_{2} i_{3}} - a^{j}_{i_{1} i_{2} i_{3}}$ 
are: 
\[ C^{0}_{1} = 
 \left[\begin{array}{c|ccc}
A_{1}\backslash  A_{2} & A_{211} & A_{221} & A_{231}  \\
\hline 
A_{111} &              &  -1,-1,2   &    \\
A_{121} &  2,1,-3  &    2,-3,1   &  -1,2,-1  \\ 
A_{131} &              &    1,-2,1   &   \\ 
\end{array}\right], 
\ \ \ 
 C^{0}_{2} = 
 \left[\begin{array}{c|ccc}
A_{1}\backslash  A_{2} & A_{212} & A_{222} & A_{232}  \\
\hline 
A_{112} &   &   &    \\
A_{122} &  & 1,0,-1 &    \\ 
A_{132} &   &  &    \\ 
\end{array}\right]
\]

The remaining entries of that matrix are then computed 
consecutively by using the identities 
\[
c^{j}_{i_{1}+1 i_{2}+1 i_{3}} 
=  c^{j}_{i_{1}+1 i_{2} i_{3}} + c^{j}_{i_{1} i_{2}+1 i_{3}}  - 
c^{j}_{i_{1} i_{2} i_{3}} 
\]
\[
c^{j}_{i_{1} i_{2}+1 i_{3}+1} 
=  c^{j}_{i_{1} i_{2} i_{3}+1} + c^{j}_{i_{1} i_{2}+1 i_{3}}  - 
c^{j}_{i_{1} i_{2} i_{3}},  
\]
\[
c^{j}_{i_{1}+1 i_{2} i_{3}+1} 
=  c^{j}_{i_{1} i_{2} i_{3}+1} + c^{j}_{i_{1}+1 i_{2} i_{3}}  - 
c^{j}_{i_{1} i_{2} i_{3}},
\]
for each $j = 1,2,3$.
Thus, we first obtain 
\[ C_{1} = 
 \left[\begin{array}{c|ccc}
A_{1}\backslash  A_{2} & A_{211} & A_{221} & A_{231}  \\
\hline 
A_{111} &    -1,3,-2  &  -1,-1,2   &  -4,4,0  \\
A_{121} &  2,1,-3    &    2,-3,1   &  -1,2,-1  \\ 
A_{131} &  1,2,-3    &    1,-2,1   &   -2,3,-1 \\ 
\end{array}\right], 
\]
then: 
\[
 C^{1}_{2} = 
 \left[\begin{array}{c|ccc}
A_{1}\backslash  A_{2} & A_{212} & A_{222} & A_{232}  \\
\hline 
A_{112} &             &  -2,2,0      &    \\
A_{122} &  1,4,-5 & 1,0,-1       & -2,5,-3   \\ 
A_{132} &             &  0,1,-1      &    \\ 
\end{array}\right]
\]
and finally: 
\[
 C_{2} = 
 \left[\begin{array}{c|ccc}
A_{1}\backslash  A_{2} & A_{212} & A_{222} & A_{232}  \\
\hline 
A_{112} & -2,6,-4      &  -2,2,0      &  -5,7,-2   \\
A_{122} &  1,4,-5      & 1,0,-1       & -2,5,-3   \\ 
A_{132} &   0,5,-5     &  0,1,-1      &  -3,6,-3 \\ 
\end{array}\right]
\]
Eventually, we obtain the 2-dimensional matrix-slices of $\widehat{M}$ for the 2 actions of player $A_{3}$:  
\[ \widehat{M}_{1}  = 
 \left[\begin{array}{c|ccc}
A_{1}\backslash  A_{2} & A_{211} & A_{221} & A_{231}  \\
\hline 
A_{111} & 0,5,-2      &  1,2,3      &  -1,5,2   \\
A_{121} &  4,4,0      & 5,1,5       &    3,4,4   \\ 
A_{131} &   7,7,3     &  8,4,8      &  3,10,7 \\ 
\end{array}\right]
\]
\[ \widehat{M}_{2}  = 
 \left[\begin{array}{c|ccc}
A_{1}\backslash  A_{2} & A_{212} & A_{222} & A_{232}  \\
\hline 
A_{112} & -1,7,4      &  0,4,7      &  -2,10,4   \\
A_{122} &  2,6,0      & 3,3,3       &   1,9,0   \\ 
A_{132} &   2,6,-3    &  3,3,0      &  -2,9,-3 \\ 
\end{array}\right]
\]
\end{example}

\medskip
In summary, as in the case of 2-person games, theorems \ref{GTN} and \ref{GTaN} together characterize precisely the payoff matrices $\widehat{M}$ that can be obtained  from matrix $M$ by an OI-transformation: any of them can be obtained by choosing any strategy profile in $M$ and setting suitable payoffs satisfying condition C$^{N}_{1}$ for all outcomes obtained by allowing any one, and only one, player to deviate from that strategy profile. Then all payoffs in $\widehat{M}$ are computed by using the recurrent formulae derived from condition C$^{N}_{2}$.

\section{Concluding remarks}
As already stated, the contributions of the present paper are purely technical: explicit and easy to apply and use characterizations of the game matrix transformations that can be induced by preplay offers for payments or threats for punishments in normal form games. Even though we have not considered rationality issues that would determine which of these transformations can be effected by offers made by \emph{rational players}, we believe that our results are of direct game-theoretic relevance, because they can be used by the players to determine what mutually desirable transformed games (e.g., having dominant stategy equilibria with Pareto optimal outcomes) they can achieve by exchange of preplay offers, and then to search -- by using the computational procedures that can be extracted from our proofs -- for suitable offers that would induce the necessary game matrix transformations leading to the desired outcomes. 

\bibliographystyle{alpha}
\bibliography{Preplay-Bibliography}

\end{document}